\newcommand{\<}{\langle}
\renewcommand{\>}{\rangle}
\newcommand{\abs}[1]{\left\lvert#1\right\rvert}
\newcommand{\norm}[1]{\lVert#1\rVert}
\providecommand{\tr}{{\rm tr}}
\renewcommand{\phi}{\varphi}
\newcommand{\bra}[1]{\left\langle #1\right\rvert}
\newcommand{\ket}[1]{\left\lvert #1\right\rangle}
\newcommand{\gf}[1]{\<#1\>_{{\bf g} \mid \vec x_{\bf f}}}
\newtheorem{theorem}{Theorem}
\begin{document}

\title{Conditional non-Gaussian quantum state preparation}

\author{Mattia Walschaers}
\email{mattia.walschaers@lkb.upmc.fr}
\author{Valentina Parigi}
\author{Nicolas Treps}
\affiliation{Laboratoire Kastler Brossel, Sorbonne Universit\'{e}, CNRS, ENS-PSL Research University, Coll\`{e}ge de France, 4 place Jussieu, F-75252 Paris, France}

\date{\today}

\begin{abstract}
We develop a general formalism, based on the Wigner function representation of continuous-variable quantum states, to describe the action of an arbitrary conditional operation on a multimode Gaussian state. We apply this formalism to several examples, thus showing its potential as an elegant analytical tool for simulating quantum optics experiments. Furthermore, we also use it to prove that EPR steering is a necessary requirement to remotely prepare a Wigner-negative state.
\end{abstract}

\maketitle


\section{\label{sec:Intro} Introduction}

In continuous-variable (CV) quantum physics, Gaussian states have long been a fruitful topic of research \cite{Schrodinger-Coherent,Kennard,PhysRev.131.2766, PhysRevLett.10.277,robinson_ground_1965,HUDSON1974249,PhysRevA.49.1567,RevModPhys.77.513,RevModPhys.84.621,doi:10.1142/S1230161214400010}. They appear naturally as the ground states of systems of many non-interacting particles in the form of thermal states \cite{verbeure_many-body_2011}, or as the coherent states that describe the light emitted by a laser \cite{PhysRev.131.2766}. Through nonlinear processes, it is possible to reduce the noise beyond the shot noise limit (at the price of increased noise in a complementary observable), and create squeezed states \cite{PhysRevA.13.2226,PhysRevLett.55.2409,PhysRevLett.57.2520,PhysRevLett.57.691,PhysRevLett.59.2555,PhysRevLett.117.110801}. For the purpose of metrology, such squeezed states are often enough to obtain a significant boost in performance \cite{PhysRevD.23.1693,Treps940,Ligo,SCHNABEL20171}.

On theoretical grounds, Gaussian states are relatively easy to handle \cite{RevModPhys.77.513,RevModPhys.84.621}. The quantum statistics of the continuous-variable observables (e.g., the quadratures in quantum optics) are described by Gaussian Wigner functions. All interesting quantum features can be deduced from the covariance matrix that characterises this Gaussian distribution on phase space. Hence, whenever the number of modes remains finite, the techniques of symplectic matrix analysis are sufficient to study Gaussian quantum states. This has generated an extensive understanding of the entanglement properties of Gaussian states \cite{PhysRevLett.84.2722,PhysRevLett.84.2726,PhysRevLett.86.3658,PhysRevA.71.055801,Adesso_2007,gerke_full_2015}, and recently it has also lead to the development of a measure for quantum steering (see \cite{RevModPhys.92.015001}) of Gaussian states with Gaussian measurements \cite{Wiseman:2007aa,Kogias:2015aa,Deng:2017aa,Cai-Steering}, which we refer to as Einstein-Podolsky-Rosen (EPR) steering.

Even though they have many advantages, Gaussian states are of limited use to quantum technologies beyond sensing. They have been shown to be easily simulated on classical devices \cite{PhysRevLett.88.097904}, and in particular Wigner negativity is known to be a necessary resource for reaching a quantum computation advantage \cite{mari_positive_2012}. However, it should be stressed that recent work has found large classes of Wigner negative states that can also be simulated easily \cite{garcalvarez2020efficient}. In other words, Wigner negativity is necessary but not sufficient to reach a quantum computation advantage \cite{rahimi-keshari_sufficient_2016}.

In the particular case of CV quantum computation, Gaussian states play an essential role in the measurement-based approach \cite{gu_quantum_2009}. In this paradigm, one establishes large Gaussian entangled states, known as cluster states, which form the backbone of the desired quantum routine \cite{PhysRevA.76.032321}. Several recent breakthroughs have led to the experimental realisation of such states \cite{Su:12,PhysRevLett.112.120505,cai-2017,Asavanant:2019aa,Larsen:2019aa}. Nevertheless, to execute quantum algorithms that cannot be simulated efficiently, one must induce Wigner negativity. In the spirit of measurement-based quantum computation, this feature is induced by measuring non-Gaussian observables, e.g., the number of photons, on a subset of modes \cite{PhysRevA.72.033822,PhysRevA.80.013806,PhysRevLett.113.100502,su2019generation}. Such a measurement then projects the remainder of the system into a non-Gaussian state. The exact properties of the resulting state depend strongly on the result of the measurement.

Such a conditional preparation of non-Gaussian quantum states is common procedure in quantum optics experiments \cite{lvovsky2020production}. Basic examples include the heralding of single-photon Fock states after parametric down-conversion \cite{PhysRevLett.56.58,PhysRevLett.87.050402,Bra_czyk_2010}, photon addition and subtraction \cite{Wenger:2004aa,Ourjoumtsev83,Zavatta660,parigi_probing_2007,averchenko_multimode_2016,Ra2019}, and known schemes to prepare more exotic states such as Schr\"odinger-cat \cite{dakna_generating_1997,Thekkadath2020engineering} or Gottesman-Kitaev-Preskill states \cite{Eaton_2019}. Remarkably, though, a practical framework to describe the effect of such conditional operations on arbitrary Gaussian states is still lacking. Notable exceptions where one does study arbitrary initial states usually rely on specific choices for the conditional measurement.

Here, in Section \ref{sec:Cond}, we introduce a practical framework to describe the resulting Wigner function for a quantum state that is conditionally prepared by measuring a subset of modes of a Gaussian multimode state. The techniques used in this work are largely based on classical multivariate probability theory and provide a conceptually new understanding of these conditioned states. In Section \ref{sec:steer}, we unveil the most striking consequence of this new framework: we can formally prove that EPR steering in the initial Gaussian state is a necessary requirement for the conditional preparation of Wigner-negative states, regardless of the measurement upon which we condition. This solidifies a previously conjectured general connection between EPR steering and Wigner-negativity. As shown in Section \ref{sec:examples}, our framework reproduce a range of known state-preparation schemes and can be used to treat more advanced scenarios, which could thus far not be addressed by other analytical methods. First, however, we review the phase space description of multimode CV systems in Section~\ref{sec:PreRec}.

\section{Phase space description of multimode continuous-variable systems}\label{sec:PreRec}

The CV approach studies quantum systems with an infinite-dimensional Hilbert space ${\cal H}$ based on observables, $\hat x$ and $\hat p$ that have a continuous spectrum and obey the canonical commutation relation $[\hat x, \hat p] = 2i$ (the factors two is chosen to normalise the vacuum noise to one). Common examples include the position and momentum operators in mechanical systems, or the amplitude and phase quadratures in quantum optics. In this work, we will use quantum optics terminology, but the results equally apply to any other system that is described by the algebra of canonical commutation relations (i.e., any bosonic system).

In a single-mode system, the quadrature observables $\hat x$ and $\hat p$ determine the optical phase space. The latter is a two-dimensional real space, where the axes denote the possible measurement outcomes for $\hat x$ and $\hat p$. It is common practice to represent a given state $\hat \rho$ by means of its measurement statistics for $\hat x$ and $\hat p$ on this optical phase space, as in statistical physics. However, because  $\hat x$ and $\hat p$ are complementary observables, they cannot be measured simultaneously, and thus, a priori, we cannot construct a joint probability distribution of phase space that reproduces the correct marginals to describe the measurement statistics the quadratures. Therefore, the phase space representation of quantum states are quasi-probability distributions. The quasi-probability distribution that reproduces the measurement statistics of the quadrature observables as its marginals, is known as the Wigner function \cite{PhysRev.40.749,PhysRev.177.1882,HILLERY1984121}
\begin{equation}
W(x,p) = \frac{1}{(2\pi)^2}\int_{\mathbb{R}^2} \tr[\hat \rho e^{i(\alpha_1 \hat x + \alpha_2 \hat p)}]e^{-i(\alpha_1 x + \alpha_2 p)} {\rm d}\alpha_1{\rm d}\alpha_2.\end{equation} For some quantum states, this function has the peculiar property of reaching negative values. This Wigner-negativity is a genuine hallmark of quantum physics, and it is understood to be crucial in reaching a quantum computational advantage.\\

Here, we will consider a multimode system comprising $m$ modes. Every mode comes with its own infinite-dimensional Hilbert space, associated to a two-dimensional phase space, and observables $\hat x_j$ and $\hat p_j$. The total optical phase space is, thus, a real space $\mathbb{R}^{2m}$ with a symplectic structure $\Omega = \bigoplus_m \omega$, where the two-dimensional matrix $\omega$ is given by
\begin{equation}
\omega= \begin{pmatrix}0 & -1 \\ 1 & 0\end{pmatrix}.
\end{equation}
Therefore, $\Omega$ has the properties $\Omega^2 = - \mathds{1}$ and $\Omega^T=-\Omega$.
Any normalised vector $\vec f \in \mathbb{R}^{2m}$ defines a single optical mode with an associated phase space ${\rm span}\{\vec f, \Omega \vec f\}$ (i.e., when $\vec f$ generates the phase space axis associated with the amplitude quadrature of this mode, $\Omega \vec f$ generates the axis for the associated phase quadrature). Henceforth, we will refer to the subsystem associated with the phase space ${\rm span}\{\vec f, \Omega \vec f\}$ as ``the mode $f$''. Every point $\vec \alpha \in \mathbb{R}^{2m}$ can also associated with a generalised quadrature observable
\begin{equation}
    \hat q(\vec \alpha) = \sum_{k=1}^{m} (\alpha_{2k-1} \hat{x}_{k} + \alpha_{2k} \hat{p}_{k}).
\end{equation}
These observables satisfy the general canonical commutation relation $[\hat q(\vec \alpha), \hat q(\vec \beta)] = -i \vec\alpha^T \Omega \vec\beta.$ Physically, such observable $\hat q(\vec \alpha)$ can be measured with a homodyne detector by selecting the mode that is determined by the direction of $\vec \alpha$, and multiplying the detector outcome by $\norm{\vec \alpha}$. In our theoretical treatment, such generalised quadratures are useful to define the quantum characteristic function of any multimode state $\hat \rho$
\begin{equation}
\chi_{\hat \rho}(\vec \alpha) =  \tr[\hat \rho \exp\{i \hat q (\vec \alpha)\}],
\end{equation}
for an arbitrary point $\vec \alpha$ in phase space. The multimode Wigner function of the state is then obtained as the Fourier transform of the characteristic function
\begin{equation}\label{eq:wigbasics}
W(\vec x) = \frac{1}{(2\pi)^{2m}}\int_{\mathbb{R}^{2m}} \chi_{\hat \rho}(\vec \alpha)e^{-i \vec \alpha^T \vec x} {\rm d}\vec \alpha,\end{equation}
where $\vec x \in \mathbb{R}^{2m}$ can, again, be any point in the multimode phase space, and the coordinates of $\vec x$ represent possible measurement outcomes for $\hat x_j$ and $\hat p_j$.

The Wigner function can be used to represent and characterise an arbitrary quantum state of the multimode system. In the same spirit, we can also define the phase space representation of an arbitrary observable ${\hat A}$ as
\begin{equation}\label{eq:WA-def}
W_{\hat A}(\vec x) = \frac{1}{(2\pi)^{2m}}\int_{\mathbb{R}^{2m}} \tr[\hat A \exp\{i \hat q (\vec \alpha)\}]e^{-i \vec \alpha^T \vec x} {\rm d}\vec \alpha,\end{equation}
such that we can fully describe the measurement statistics of an arbitrary quantum observable on phase space, by invoking the identity
\begin{equation}\label{eq:wigbasicsA}
\tr[\hat \rho \hat A] = (4\pi)^m \int_{\mathbb{R}^{2m}} W_{\hat A}(\vec x)W(\vec x)  {\rm d}\vec x\end{equation}
to evaluate expectation values. In practice, it is often challenging to obtain Wigner functions for arbitrary states or observables, but in some cases they can take convenient forms. 

A particular class of convenient states are Gaussian states, where the Wigner function $W(\vec x)$ is a Gaussian. As a consequence, the Wigner function is positive, and can thus be interpreted as a probability distribution. This Gaussian distribution is completely determined by a covariance matrix $V$, and mean-field $\vec \xi$, such that the Wigner function takes the form
\begin{equation}
W(\vec x) = \frac{e^{-\frac{1}{2} (\vec x-\vec \xi)^T V^{-1}(\vec x-\vec \xi)}}{(2\pi)^m \sqrt{\det V}}.
\end{equation}
This forms the basis of our preparation procedure for non-Gaussian states as we assume that our initial multimode system is prepared in such a Gaussian state.

To perform the conditional state-preparation, we divide the $m$-mode system in two subsets of orthogonal modes, ${\bf f} = \{f_1, \dots, f_l\}$ and ${\bf g} = \{g_1, \dots, g_{l'}\}$, with $l+l'=m$, and perform a measurement on the modes in ${\bf g}$. We can then describe the subsystems of modes ${\bf f}$ and ${\bf g}$ by phase spaces $\mathbb{R}^{2l}$ and $\mathbb{R}^{2l'}$, respectively. As such, the joint phase space can be mathematically decomposed as $\mathbb{R}^{2m} = \mathbb{R}^{2l} \oplus \mathbb{R}^{2l'}$. A general point $\vec{x}$ in the multimode phase space $\mathbb{R}^{2m}$ can thus be decomposed as $\vec{x} = \vec{x}_{\bf f} \oplus \vec{x}_{\bf g}$, where $\vec{x}_{\bf f}$ and $\vec{x}_{\bf g}$ describe the phase space coordinates associated with the sets of modes ${\bf f}$ and ${\bf g}$, respectively. In particular, $\vec{x}_{\bf f}$ can be expanded in a particular modes basis $f_1, \dots, f_l$ as $\vec{x}_{\bf f} = (x_{f_1}, p_{f_1}, \dots, x_{f_l}, p_{f_l})$, where the coordinates $x_{f_1}$ and $p_{f_1}$ are obtained as
\begin{align}
    &x_{f_j} = \vec x^T \vec f_j,\\
    &p_{f_j} = \vec x^T \Omega \vec f_j,
\end{align}
A completely analogous treatment is possible for the coordinates associated with the set of modes ${\bf g}$.


\section{Conditional operations in phase space}\label{sec:Cond}

\begin{figure}
\includegraphics[width=0.49\textwidth]{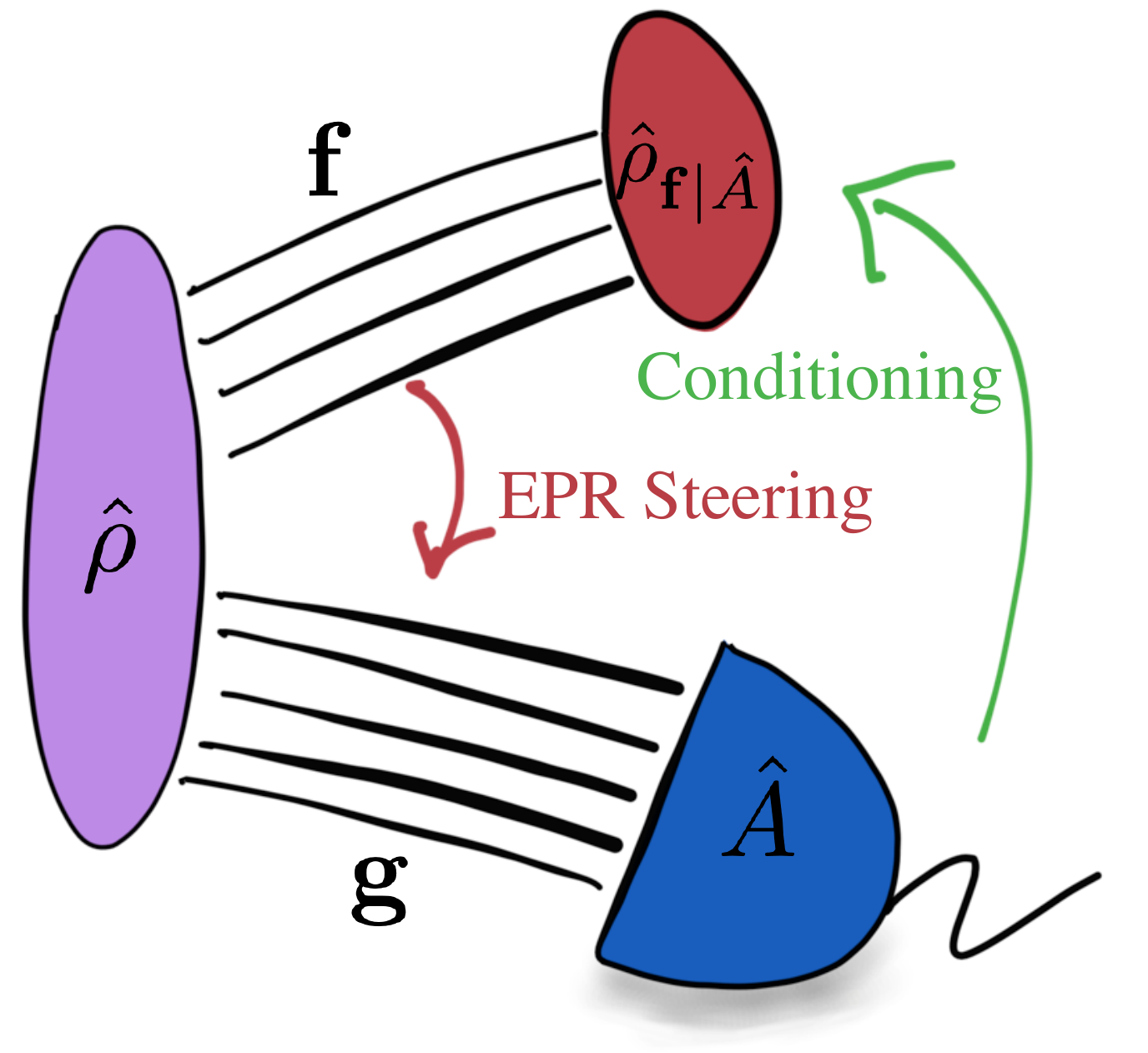}
\caption{Sketch of the conditional state preparation scenario: a multimode quantum state with density matrix $\hat \rho$ is separated over two subsets of modes, ${\bf f}$ and ${\bf g}$. A measurement is performed on the modes in ${\bf g}$, yielding a result associated with a POVM element $\hat A$. Conditioning on this measurement outcome ``projects'' the the subset of mode ${\bf f}$ into a state $\hat \rho_{{\bf f} \mid {\hat A}}$. The directional EPR steering, discussed in Section \ref{sec:steer}, is highlighted. \label{Fig:Sketch}}
\end{figure}

In quantum optics, we associate a Hilbert space (more precisely a Fock space) to each of these modes. The Hilbert space ${\cal H}$ of the entire system can then be structured as ${\cal H} = {\cal H}_{\bf f} \otimes {\cal H}_{\bf g}$, where $ {\cal H}_{\bf f}$ ( ${\cal H}_{\bf g}$) describes that quantum states of the set of orthogonal modes ${\bf f}$ (${\bf g}$). Formally, the state of our full $m$-mode system is then described by a density matrix $\hat \rho$ that acts on ${\cal H}$.

Within this manuscript, we perform a conditional operation in the set of modes ${\bf g}$, which we describe through a (not necessarily normalised) set of Kraus operators \cite{KRAUS1971311} $\hat X_j$ that act on ${\cal H}_{\bf g}$ \footnote{On the full multimode Hilbert space, the operator will take the form $\mathds{1} \otimes \hat X_j$, but for simplicity we will just denote it as $\hat X_j$.}:
\begin{equation}\label{eq:condFormal}
\hat \rho \mapsto \frac{\sum_j \hat X_j \hat \rho \hat X^{\dag}_j}{\tr [\sum_j \hat X^{\dag}_j \hat X_j \hat \rho]}.
\end{equation}
Such a conditional operation naturally arises as a post-measurement state, when $\hat X_j$ is a projector, or when $\hat A = \sum_j \hat X^{\dag}_j\hat X_j$ is a more general POVM element. The positive semi-definite operator $\hat A$ is useful to express the reduced state of the set of modes ${\bf f}$:
\begin{equation}\label{eq:condState}
\hat \rho_{{\bf f} \mid {\hat A}} = \frac{\tr_{\bf g} [\hat A \hat  \rho]}{\tr [\hat A \hat \rho]},
\end{equation}
where $\tr_{\bf g}$ denotes the partial trace of the Hilbert space ${\cal H}_{\bf g}$ associated with the set of mode ${\bf g}$.
Our general goal is to understand the properties of the state $\hat \rho_{{\bf f} \mid {\hat A}}$.

As we are interested in the Wigner function for the state of the subset of modes ${\bf f}$, we translate (\ref{eq:condState}) to its phase-space representation. We initialize the total system in a Gaussian state with Wigner function $W(\vec{x})$. Subsequently we also define the Wigner function $W_{\hat A}(\vec{x}_{\bf g})$ of the positive operator $\hat A$, which is a function that is defined according to (\ref{eq:WA-def}) on the phase space that describes the subset of modes $\bf g$. As such, we find that
\begin{equation}
\begin{split}\label{eq:wigcondinitial}
&W_{{\bf f} \mid \hat A }(\vec{x}_{\bf f}) = \frac{\int_{\mathbb{R}^{2l'}} W_{\hat A}(\vec{x}_{\bf g}) W(\vec{x}) {\rm d}\vec{x}_{\bf g}}{\int_{\mathbb{R}^{2m}} W_{\hat A}(\vec{x}_{\bf g}) W(\vec{x}) {\rm d}\vec{x}}.
\end{split}
\end{equation}
Because $\hat A$ is a positive semi-definite operator, the denominator is a positive constant.\\ 

As presented in (\ref{eq:wigcondinitial}), the Wigner function $W_{{\bf f} \mid \hat A }(\vec{x}_{\bf f})$ is impractical to use and its properties are not apparent. Hence, we now introduce some mathematical tools to obtain a more insightful expression for $W_{{\bf f} \mid \hat A }(\vec{x}_{\bf f})$. First, we use that, for Gaussian states, $W(\vec{x})$ is a probability distribution on phase space, such that we can define the conditional probability distribution through 
\begin{equation}
\begin{split}\label{eq:condProbDist}
W(\vec{x}_{\bf g} \mid \vec{x}_{\bf f}) = \frac{W(\vec{x})}{W_{\bf f}(\vec{x}_{\bf f})},
\end{split}
\end{equation}
where $W_{\bf f}(\vec{x}_{\bf f})$ is the reduced Gaussian state for the set of modes ${\bf f}$,
\begin{equation}
W_{\bf f}(\vec{x}_{\bf f}) = \int_{\mathbb{R}^{2l'}} W(\vec{x})  {\rm d}\vec{x}_{\bf g}.
\end{equation}
Because $W(\vec{x})$ is a Gaussian probability distribution, the conditional probability distribution $W(\vec{x}_{\bf g} \mid \vec{x}_{\bf f})$ is also a Gaussian distribution \cite{Muirhead:aa} with covariance matrix
\begin{equation}\label{eq:schur}
V_{{\bf g}\mid \vec x_{\bf f}} = V_{\bf g} - V_{{\bf gf}}V_{\bf f}^{-1}V_{\bf gf}^T,
\end{equation}
where $V_{\bf g}$ and $V_{\bf f}$ are the covariance matrices describing the subsets of modes ${\bf g}$ and ${\bf f}$ in the initial state, whereas $V_{\bf gf}$ describes all the initial Gaussian correlations between those subsets. Note that this covariance matrix is the same far all points $\vec{x}_{\bf f} \in \mathbb{R}^{2l}$, which is a particular property of Gaussian conditional probability distributions. 
Furthermore, the distribution $W(\vec{x}_{\bf g} \mid \vec{x}_{\bf f})$ also contains a displacement
\begin{equation}
\vec{\xi}_{{\bf g} \mid \vec x_{\bf f}} = \vec{\xi}_{\bf g} + V_{\bf gf} V_{\bf f}^{-1} (\vec{x}_{\bf f} - \vec{\xi}_{\bf f}),
\end{equation}
where $\vec{\xi}_{\bf g}$ and $\vec{\xi}_{\bf f}$ describe the displacements of the initial state in the sets of modes $\bf g$ and $\bf f$, respectively.

Generally, the phase space probability distribution $W(\vec{x}_{\bf g} \mid \vec{x}_{\bf f})$ is not a valid Wigner function of a well-defined quantum state, in the sense that it would violate the Heisenberg inequality. However, it does remain a well-defined probability distribution, i.e., it is normalised and positive. Thus, it still has interesting properties that we can exploit to formulate a general expression for $W_{{\bf f} \mid \hat A }(\vec{x}_{\bf f})$. Let us therefore define
\begin{equation}\label{eq:EA}
\<\hat A\>_{{\bf g} \mid \vec x_{\bf f}}= (4 \pi)^{l'} \int_{\mathbb{R}^{2l'}} W_{\hat A}(\vec{x}_{\bf g}) W(\vec{x}_{\bf g} \mid \vec{x}_{\bf f}) {\rm d}\vec{x}_{\bf g},
\end{equation}
which is the expectation value of the phase-space representation of $\hat A$ with respect to the probability distribution $W(\vec{x}_{\bf g} \mid \vec{x}_{\bf f})$. 
We can then use (\ref{eq:condProbDist}) and (\ref{eq:EA}) to recast (\ref{eq:wigcondinitial}) in the following form:
\begin{equation}\label{eq:Wf}
W_{{\bf f} \mid \hat A }(\vec{x}_{\bf f}) = \frac{\<\hat A\>_{{\bf g} \mid \vec x_{\bf f}}}{\<\hat A\>} W_{\bf f}(\vec{x}_{\bf f}),
\end{equation}
where we introduce the notation
\begin{equation}\label{eq:expA}
    \<\hat A\> =  \tr [\hat A \hat \rho] = (4\pi)^{l'}\int_{\mathbb{R}^{2m}} W_{\hat A}(\vec{x}_{\bf g}) W(\vec{x}) {\rm d}\vec{x}
\end{equation}
The major advantage of this formulation is that $\<\hat A\>_{{\bf g} \mid \vec x_{\bf f}}$ represents the average with respect to a Gaussian probability distribution, such that one can use several computational techniques that are well-know for Gaussian integrals. A notable property is the factorisation of higher moments in multivariate Gaussian distributions, such that $\<\hat A\>_{{\bf g} \mid \vec x_{\bf f}}$ can generally be expressed algebraically in terms of the components of $V_{{\bf g}\mid \vec x_{\bf f}}$ and $\vec{\xi}_{{\bf g} \mid \vec x_{\bf f}}$.\\

Finally, we remark that $\<\hat A\>_{{\bf g} \mid \vec x_{\bf f}} = \<\hat A\>$ in absence of correlations between the set of modes ${\bf g}$ that are conditioned upon and the set modes ${\bf f}$ for which we construct the reduced state. This result is directly responsible for the previously obtained results related to the spread of non-Gaussian features in cluster states \cite{Walschaers:2018aa}.

\section{Einstein-Podolsky-Rosen steering and Wigner-negativity}\label{sec:steer}

When two systems are connected through a quantum correlation, on can, in some cases, perform quantum steering \cite{RevModPhys.92.015001}. Colloquially, we say that a subsystem ${\cal X}$ can steer a subsystem ${\cal Y}$ when measurements of certain observables in ${\cal X}$ can influence the conditional measurement statistics of observables in ${\cal Y}$ beyond what is possible with classical correlations. Ultimately, in quantum steering one studies properties of conditional quantum states as compared to a local hidden variable model for any observables $X$ and $Y$, acting on ${\cal X}$ and ${\cal Y}$, respectively. Contrary to the case of Bell non-locality, quantum steering considers an asymmetric local hidden variable model:
\begin{equation}
    P(X = x, Y = y) = \sum_\lambda P(\lambda)P(X=x \mid \lambda)P_Q(Y=y \mid \lambda),
\end{equation} 
where one assumes that the probability distributions $P_Q(Y=y \mid \lambda)$ of steered party ${\cal Y}$ follow the laws of quantum mechanics. For the party ${\cal X}$, which performs the steering, no such assumption is made and any probability distribution is allowed. Such local hidden variable model can typically be falsified, either by brute force computational methods \cite{0034-4885-80-2-024001} or via witnesses \cite{PhysRevA.80.032112}. These methods have been applied in a variety of contexts to experimentally observe quantum steering \cite{Saunders:2010aa,PhysRevX.2.031003,Handchen:2012aa,10.1038/ncomms1628,PhysRevLett.110.130407,10.1038/ncomms6886,Cavailles:2018aa,Deng:2017aa,Cai-Steering}.

A paradigmatic example is found when performing homodyne measurements on EPR state \cite{PhysRevLett.60.2731}: when the entanglement in the system is sufficiently strong, one can condition the $\hat x$ and $\hat p$ quadrature measurements in ${\cal Y}$ on the outcome of the same quadrature measurement in ${\cal X}$. The obtained conditional probability distributions for the quadrature measurements in ${\cal Y}$ can violate the Heisenberg inequality, even when averaged over all measurement outcomes in ${\cal X}$. The violations of such a conditional inequality is impossible with classical correlations, but is a hallmark of quantum steering.

Quantum steering can occur in all types of quantum states, with all kinds of measurements. In CV, one often refers to the particular case of Gaussian states that can be steered through Gaussian measurements as EPR steering. Recently, other forms of steering for Gaussian states have been developed under the name of non-classical steering \cite{frigerio2020nonclassical}. In this approach, one checks whether Gaussian measurements in ${\cal X}$ can induce a nonclassical conditional state in ${\cal Y}$. Throughout our current work, the focus lies on EPR steering, where the systems ${\cal X}$ and ${\cal Y}$ are the sets of modes ${\bf f}$ and ${\bf g}$, respectively.\\   


In previous work, we showed EPR steering is a necessary prerequisite to remotely generate Wigner negativity through photon subtraction \cite{PhysRevLett.124.150501}. More precisely, when a photon is subtracted in a mode $g$, the reduced state Wigner function of a correlated mode $f$ can only be non-positive if mode $f$ is able to steer mode $g$. When one allows for an additional Gaussian transformation on mode $g$ prior to photon subtraction, we found that EPR steering from $f$ to $g$ is also a sufficient condition to reach Wigner negativity in mode $f$. 

The formalism that was developed in the previous section allows us to generalize this previous result to arbitrary conditional operations on an arbitrary number of modes:
\begin{theorem}
For any initial Gaussian state $\hat \rho$ and any conditional operation $\hat A$ in (\ref{eq:condState}),  EPR steering between the set of modes $\bf f$ and the set of modes $\bf g$ is {\em necessary} to induce Wigner negativity in $W_{{\bf f} \mid \hat A }(\vec{x}_{\bf f})$.
\end{theorem}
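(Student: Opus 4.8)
The plan is to read off the sign of $W_{{\bf f} \mid \hat A}(\vec x_{\bf f})$ directly from the factorised expression (\ref{eq:Wf}). In that formula $W_{\bf f}(\vec x_{\bf f})$ is the reduced Gaussian state of the modes ${\bf f}$, hence a genuine (positive) Gaussian probability density, and $\<\hat A\> = \tr[\hat A \hat \rho]$ is a strictly positive normalisation constant. Therefore the only factor that can ever change sign is $\gf{\hat A}$, and $W_{{\bf f} \mid \hat A}(\vec x_{\bf f}) < 0$ for some $\vec x_{\bf f}$ precisely when $\gf{\hat A} < 0$ for that same $\vec x_{\bf f}$. The whole problem thus reduces to deciding when the Gaussian average $\gf{\hat A}$ in (\ref{eq:EA}) can be driven below zero.

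First I would establish a sufficient condition for positivity. Suppose that, for a given $\vec x_{\bf f}$, the conditional distribution $W(\vec x_{\bf g} \mid \vec x_{\bf f})$ is itself a bona fide Wigner function, i.e.\ it coincides with $W_{\hat \sigma}$ for some density operator $\hat \sigma$ on $\h_{\bf g}$. Comparing the definition (\ref{eq:EA}) with the overlap identity (\ref{eq:wigbasicsA}) restricted to the $l'$ modes of ${\bf g}$ then gives $\gf{\hat A} = \tr[\hat A \hat \sigma]$. Since $\hat A$ is positive semi-definite and $\hat \sigma$ is a state, this trace is non-negative. Hence, whenever $W(\vec x_{\bf g} \mid \vec x_{\bf f})$ is physical, $\gf{\hat A} \geq 0$ for every admissible $\hat A$, and by the previous paragraph no Wigner negativity can appear in $W_{{\bf f} \mid \hat A}$.

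It remains to characterise when $W(\vec x_{\bf g} \mid \vec x_{\bf f})$ fails to be a physical Wigner function and to connect this with steering. Because $W(\vec x_{\bf g} \mid \vec x_{\bf f})$ is Gaussian with covariance matrix equal to the Schur complement $V_{{\bf g} \mid \vec x_{\bf f}}$ of (\ref{eq:schur}) --- which, crucially, does not depend on $\vec x_{\bf f}$ --- and with a mean that only shifts with $\vec x_{\bf f}$, its physicality is a property of $V_{{\bf g} \mid \vec x_{\bf f}}$ alone: a displacement is a unitary operation and never spoils positivity. A Gaussian with covariance $V_{{\bf g} \mid \vec x_{\bf f}}$ corresponds to a valid quantum state exactly when the uncertainty relation $V_{{\bf g} \mid \vec x_{\bf f}} + i\Omega_{\bf g} \geq 0$ holds. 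The decisive input is then the Wiseman--Jones--Doherty characterisation of Gaussian EPR steering: the set of modes ${\bf f}$ can steer the set of modes ${\bf g}$ if and only if this very Schur complement violates the uncertainty relation. Assembling the implications, the absence of steering from ${\bf f}$ to ${\bf g}$ forces $V_{{\bf g} \mid \vec x_{\bf f}} + i\Omega_{\bf g} \geq 0$, so $W(\vec x_{\bf g} \mid \vec x_{\bf f})$ is physical for every $\vec x_{\bf f}$, so $\gf{\hat A} \geq 0$ throughout, so $W_{{\bf f} \mid \hat A}$ is non-negative. The theorem is the contrapositive of this chain.

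The step I expect to be the genuine crux is the identification of EPR steering from ${\bf f}$ to ${\bf g}$ with the non-physicality of the conditional covariance matrix $V_{{\bf g} \mid \vec x_{\bf f}}$. Everything else is essentially bookkeeping around the clean factorisation (\ref{eq:Wf}), but this equivalence is what carries the physics: the same matrix that governs whether the Gaussian conditional distribution can masquerade as a quantum state is the matrix whose positivity defines steerability. I would therefore take care to invoke the steering criterion in the correct direction (${\bf f}$ steering ${\bf g}$, matching the modes that are measured) and in its multimode form valid for arbitrary Gaussian measurements, and to note explicitly that negativity of $\gf{\hat A}$ is only \emph{possible}, not guaranteed, once steering is present --- which is exactly why steering is necessary but in general not sufficient.
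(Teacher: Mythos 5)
Your proposal is correct and follows essentially the same route as the paper's own proof: read the sign of $W_{{\bf f} \mid \hat A}$ off the factorisation (\ref{eq:Wf}), observe that absence of steering makes $W(\vec{x}_{\bf g} \mid \vec{x}_{\bf f})$ the Wigner function of a bona fide state so that $\gf{\hat A} = \tr[\hat A \hat\sigma] \geqslant 0$, and close the argument with the Wiseman--Jones--Doherty identification of Gaussian steerability with violation of the Heisenberg inequality by the Schur complement $V_{{\bf g}\mid \vec x_{\bf f}}$. Your added remarks --- that the covariance of the conditional distribution is independent of $\vec x_{\bf f}$ so physicality is decided once for all points, and that steering makes negativity merely possible rather than guaranteed --- are accurate and consistent with the paper.
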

\begin{proof} 
Gaussian EPR steering is generally quantified through the properties of $V_{{\bf g}\mid \vec x_{\bf f}}$. In particular, one can show that the set of modes in ${\bf f}$ can jointly steer the set of modes ${\bf g}$ if and only if $V_{{\bf g}\mid \vec x_{\bf f}}$ violates the Heisenberg inequality \cite{Wiseman:2007aa,Kogias:2015aa}. The crucial consequence is that $W(\vec{x}_{\bf g} \mid \vec{x}_{\bf f})$, as defined in (\ref{eq:condProbDist}), is itself a well-defined Gaussian quantum state when the modes in ${\bf f}$ {\em cannot} steer the modes ${\bf g}$. For all possible $\vec{x}_{\bf f},$ we can thus associate this Gaussian quantum state with a density matrix $\hat \rho_{{\bf g}\mid \vec x_{\bf f}}$.

The crucial observation is that, for any $\vec x_{\bf f}$, $\int_{\mathbb{R}^{2l'}} W_{\hat A}(\vec{x}_{\bf g}) W(\vec{x}_{\bf g} \mid \vec{x}_{\bf f}) {\rm d}\vec{x}_{\bf g}$ is the expectation value of $\hat A$ in a well-defined quantum state $\rho_{{\bf g} \mid \vec x_{\bf f}}$. Because $\hat A$ is a positive semi-definite operator, we directly find that
\begin{equation}
\<\hat A\>_{{\bf g} \mid \vec x_{\bf f}} = \tr [\rho_{{\bf g}\mid \vec x_{\bf f}} \hat A] \geqslant 0 \quad \text{ for all } \quad \vec x_{\bf f} \in \mathbb{R}^{2l}.
\end{equation}
Therefore, the overall conditional Wigner function $W_{{\bf f} \mid \hat A }(\vec{x}_{\bf f})$ in (\ref{eq:Wf}) is non-negative. We can only achieve $\<\hat A\>_{{\bf g} \mid \vec x_{\bf f}} < 0$ for certain points $\vec x_{\bf f} \in \mathbb{R}^{2l}$ when $V_{{\bf g}\mid \vec x_{\bf f}}$ violates the Heisenberg inequality. This concludes that in absence of EPR steering $W_{{\bf f} \mid \hat A }(\vec{x}_{\bf f}) \geqslant 0$.\end{proof}


Note that the steps in this proof rely heavily on the fact that the initial state is Gaussian. For other types of quantum states, we cannot directly relate quantum steering to the properties of $W(\vec{x}_{\bf g} \mid \vec{x}_{\bf f})$.

\section{Examples}\label{sec:examples}

\subsection{Heralding}

In the first example, we consider a scenario where a photon-number revolving measurement is performed on one of the output modes, which can be considered a special case of the situation considered in \cite{su2019generation}. Heralding is ubiquitous in quantum optics, as it is one of the most common tools to generate single photon Fock states \cite{PhysRevLett.56.58,PhysRevLett.87.050402,Bra_czyk_2010}.

To study heralding, we use (\ref{eq:Wf}) where a measurement of the number of photon $n$ in a single mode $g$ is performed. We assume that this measurement is optimal, and, thus, that we project on a Fock state $\lvert n\rangle$. In this case, we set $\hat A = \lvert n\rangle\langle n\rvert$, and therefore we obtain that
\begin{equation}
W_{\hat A}(\vec{x}_g) = \sum _{k=0}^n \binom{n}{k}\frac{(-1)^{n+k} \norm{\vec{x}_g}^{2k}}{k!} \frac{e^{-\frac{1}{2} \norm{\vec{x}_g}^2}}{2 \pi},
\end{equation}
where we used the closed form of the Laguerre polynomial. Hence, we can now use this expression to calculate $\< \lvert n\rangle\langle n\rvert \>_{g \mid \vec x_{\bf f}}$.
It is convenient to explicitly write
\begin{equation}
W(\vec{x}_g \mid \vec{x}_{\bf f})  =  \frac{\exp \left[ -\frac{1}{2} (\vec{x}_g - \vec{\xi}_{g \mid \vec x_{\bf f}})^T V_{g \mid \vec x_{\bf f}}^{-1} (\vec{x}_g - \vec{\xi}_{g \mid \vec x_{\bf f}}) \right]}{2 \pi \sqrt{\det V_{g \mid \vec x_{\bf f}}}},
\end{equation}
and we can recast
\begin{equation}\begin{split}
&\exp \left[ -\frac{1}{2} (\vec{x}_g - \vec{\xi}_{g \mid \vec x_{\bf f}}])^T V_{g \mid \vec x_{\bf f}}^{-1} (\vec{x}_g - \vec{\xi}_{g \mid \vec x_{\bf f}}) -\frac{1}{2} \norm{\vec{x}_g}^2\right]\\
&= e^{ -\frac{1}{2} [(\mathds{1} + V_{g\mid \vec x_{\bf f}})\vec{x}_g - \vec{\xi}_{g \mid \vec x_{\bf f}}]^T [V_{g \mid \vec x_{\bf f}}(\mathds{1}+V_{g \mid \vec x_{\bf f}})]^{-1} [(\mathds{1} + V_{g\mid \vec x_{\bf f}})\vec{x}_g - \vec{\xi}_{g \mid \vec x_{\bf f}}]}\\
&\quad\times e^{-\frac{1}{2} \vec{\xi}_{g\mid \vec x_{\bf f}}^T [\mathds{1} + V_{g\mid \vec x_{\bf f}}]^{-1}\vec{\xi}_{g\mid \vec x_{\bf f}}}.
\end{split}
\end{equation}
After a substitution in the integral, we then find that
\begin{equation}\begin{split}\label{eq:HeraldingComp}
&\< \lvert n\rangle\langle n\rvert \>_{g \mid \vec x_{\bf f}} \\&= 2 \det(\mathds{1} + V_{g\mid \vec x_{\bf f}})^{-1/2} e^{-\frac{1}{2} \vec{\xi}_{g\mid \vec x_{\bf f}}^T [\mathds{1} + V_{g\mid \vec x_{\bf f}}]^{-1}\vec{\xi}_{g\mid \vec x_{\bf f}}}\sum _{k=0}^n \binom{n}{k}\frac{(-1)^{n+k}}{k!}\\
&\quad\times \int_{\mathbb{R}^2}  \frac{ \norm{(\mathds{1} + V_{g\mid \vec x_{\bf f}})^{-1}\vec{x}_g}^{2k}e^{ -\frac{1}{2} (\vec{x}_g - \vec{\xi}_{g \mid \vec x_{\bf f}})^T \sigma^{-1} (\vec{x}_g - \vec{\xi}_{g \mid \vec x_{\bf f}})}}{2 \pi \sqrt{\det \sigma}}{\rm d}\vec{x}_g,
\end{split}
\end{equation}
where we defined $\sigma = V_{g \mid \vec x_{\bf f}}(\mathds{1}+V_{g \mid \vec x_{\bf f}})$, which is now the covariance matrix of a new Gaussian probability distribution. 
The final expression is then determined by the moments of the Gaussian distribution with covariance matrix $\sigma$ and displacement $\vec{\xi}_{g\mid \vec x_{\bf f}}$. Even though this expression is relatively elegant, it can be remarkably tedious to compute for larger values of $n$.\\

First, let us focus on the experimentally relevant case where $n=1$ as an illustration. The evaluation of (\ref{eq:HeraldingComp}) is than conducted by calculating the second moments of a Gaussian distribution, such that we ultimately find
\begin{equation}
\begin{split}
W_{{\bf f} \mid\, \ket{1}\bra{1} }&(\vec{x}_{\bf f}) =\\
&\big[\norm{(\mathds{1} + V_{g\mid \vec x_{\bf f}})^{-1} \vec{\xi}_{g \mid \vec x_{\bf f}}}^2 + \tr [(\mathds{1}+ V_{g\mid \vec x_{\bf f}})^{-1}V_{g\mid \vec x_{\bf f}}] -1\big]\\
&\times  \frac{\det(\mathds{1} + V_{g})^{1/2}}{\det(\mathds{1} + V_{g\mid \vec x_{\bf f}})^{1/2}}  \frac{e^{-\frac{1}{2} \vec{\xi}_{g\mid \vec x_{\bf f}}^T [\mathds{1} + V_{g\mid \vec x_{\bf f}}]^{-1}\vec{\xi}_{g\mid \vec x_{\bf f}}} }{ \tr [(\mathds{1}+ V_{g})^{-1}V_{g}] -1} W_{\bf f} (\vec{x}_{\bf f}),
\end{split}
\end{equation}
where we set $\vec{\xi}_g = 0$, thus assuming that there is no mean field in mode $g$. We note that this function reaches negative values if and only of $\tr [(\mathds{1}+ V_{g\mid \vec x_{\bf f}})^{-1}V_{g\mid \vec x_{\bf f}}] < 1$. Using Williamson's decomposition as we did in \cite{PhysRevLett.124.150501}, it can be shown that this condition can only be fulfilled when the set of modes $\bf f$ can perform EPR steering in mode $g$, or, in other words, when $V_{g\mid \vec x_{\bf f}}$ violates the Heisenberg inequality. This is exactly what we can expect from our general result in Section \ref{sec:steer}.\\ 

 In general we know that the Wigner function (\ref{eq:Wf}) can only be negative when $V_{g\mid\vec x_{\bf f}}$ is not a covariance matrix of a well-defined quantum state. However, determining the existence of zeroes of this Wigner function is a cumbersome task. For heralding with $n > 1$ we, therefore, restrict to numerical simulations using a specific initial state.
 
This specific initial state is generated by mixing two squeezed thermal states on a balance beamsplitter, where one of the output modes will serve as $f$, and the other as $g$. In the limiting case where the initial thermal noise vanishes, we recover the well-known EPR state which manifests perfect photon-number correlations between modes $f$ and $g$. In this case, it is clear a detection of $n$ photons in mode $g$ will herald the state $\ket{n}$ in mode $f$. However, by introducing thermal noise the photon-number correlations fade and the properties of the heralded state in mode $f$ are less clear. Thermal noise will also gradually reduce the EPR steering in the system, such that the Wigner negativity in mode $f$ will vanish when the thermal noise becomes too strong. Hence, with this example we can study the interplay between Wigner negativity and EPR steering in a controlled setting. 

The squeezed thermal state is characterised by a covariance matrix $V= {\rm diag}[\delta/s, \delta s]$, where $\delta$ denotes the amount of initial thermal noise, and $s$ is the squeezing parameter. We initially start with two copies of such a state, and rotate the phase of one of them by $\pi/2$ (see Fig.~\ref{Fig:Heralding}). When both modes are mixed on a beamsplitter, the resulting state manifests EPR steering depending on parameters $\delta$ and $s$, which can be quantified through \cite{Kogias:2015aa}
 \begin{equation}\label{eq:mu}
     \mu = \max \left\{0, -\frac{1}{2}\log \det V_{g\mid \vec x_f}\right\},
 \end{equation}
where we explicitly use the fact that $V_{g\mid \vec x_f}$ is a two-dimensional matrix. When we then post-select on the number of photons, $n$, measured in one output mode, we herald a conditional non-Gaussian state in the other mode. In Fig.~\ref{Fig:Heralding}, we show the resulting Wigner functions for the case where the detected number of photons is $n=5$. When the amount of EPR steering is varied (note that $\mu = 0.55$ corresponds to the pure state), we see that the resulting Wigner function rapidly loses Wigner negativity. In full agreement with our general result of the previous section, we also find that the Wigner negativity vanishes when there is no EPR steering. 

A more quantitative study of the Wigner negativity can be found in Panel (c) of Fig.~\ref{Fig:Heralding}, where we vary, both, the amount of steering $\mu$ and the number of detected photons $n$. The Wigner negativity is measured by the quantity \cite{Takagi:2018aa,PhysRevA.98.052350,Kenfack:2004aa}
\begin{equation}\label{eq:wigneg}
{\cal N} = \int_{\mathbb{R}^2} \abs{W_{{f} \mid \hat A }(\vec{x}_{f})} {\rm d}\vec x_{f} - 1 
\end{equation}
When the state is pure (here for $\mu = 0.55$), a detection of $n$ photons in one mode herald a Fock state $\ket{n}$ in the other mode and the Wigner negativity, thus, increases with $n$. However, once the state is no longer pure and the steering decreases, we observe the existence of an optimal value $n$ for which the maximal amount of Wigner negativity is obtained. For very weak EPR steering (e.g. $\mu =0.08$ in this calculation), this optimal value is obtained for $n=1$.
 
\begin{figure*}
\includegraphics[width=0.99\textwidth]{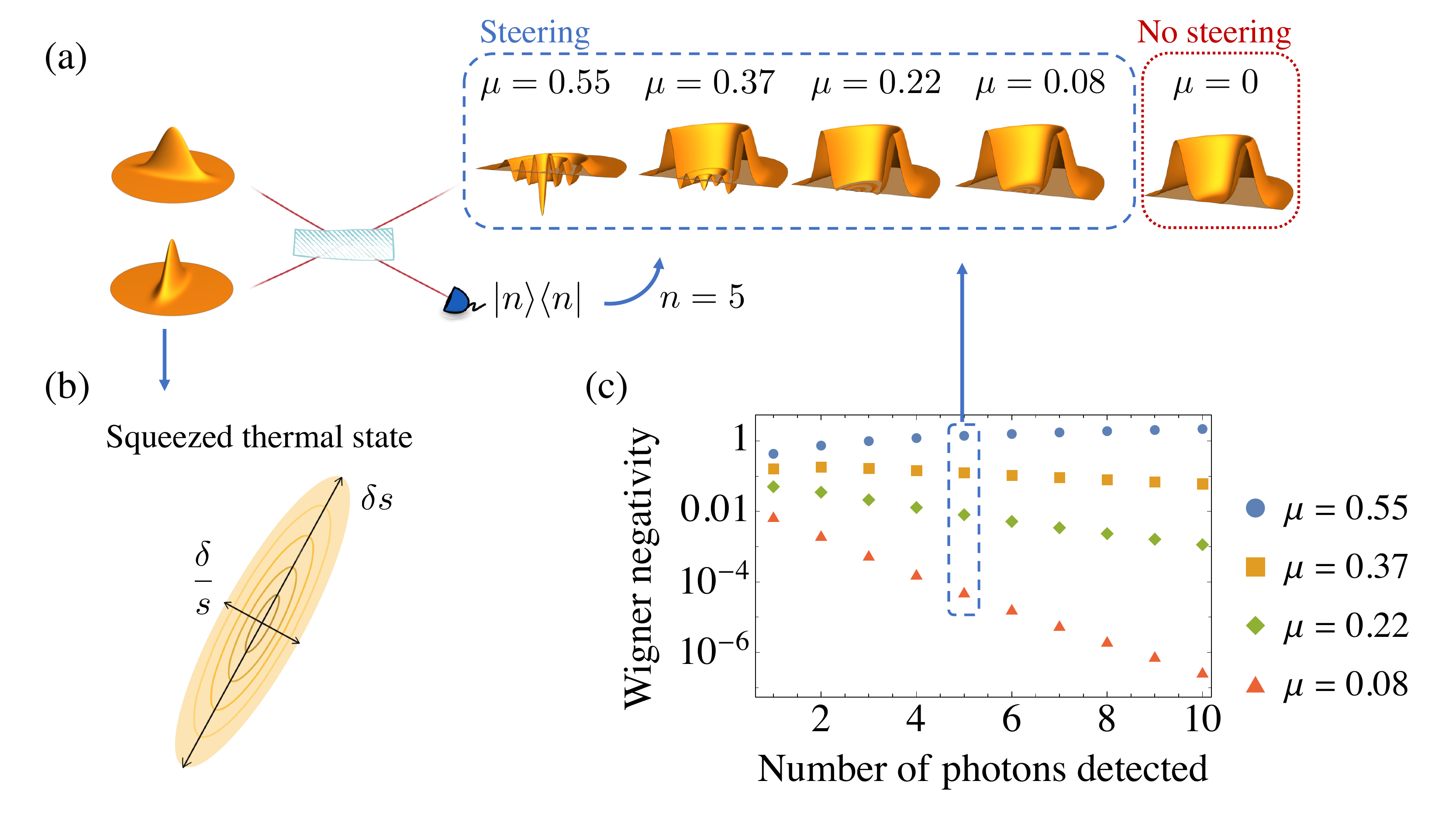}
\caption{Photon heralding with a particular Gaussian input state, generated by mixing two equal squeezed thermal states (b) on a balanced beamsplitter (a). On one of the outputs of the beamsplitter, a projective measurement is performed on the Fock state $\ket{n}$, which heralds a non-Gaussian state in the other mode. The Wigner functions of this non-Gaussian state are shown for the case where $n=5$, with varying degrees of EPR steering $\mu$, controlled by varying the thermal noise $\delta$ for a fixed squeezing $s=5{\rm dB}$. The Wigner negativity, measured by ${\cal N}$ (\ref{eq:wigneg}) is shown in (c) for varying degrees of EPR steering and a varying number of measured photons $n$. \label{Fig:Heralding}}
\end{figure*}

This numerical study shows the fruitfulness of our presented framework to study a very concrete heralding scheme. Furthermore, the example confirms the relationship between Gaussian EPR steering and Wigner negativity.

\subsection{Photon-added and -subtracted states}

Ideal photon addition and subtraction are defined by acting with a creation operator $\hat a^{\dag}$ or annihilation operator $\hat a$, respectively, on the quantum state. In practice, these operations are often realised by using some form of heralding \cite{Wenger:2004aa}, which we treated in the previous example. However, it tends to be more convenient to use the idealised model, based on creation and annihilation operators, and it has been shown experimentally that this model is highly accurate. This model also fits the conditional state framework of (\ref{eq:condFormal}), where we set $\hat X_j$ to be a creation or annihilation operator.

In multimode systems, photon addition and subtraction have been considered for their entanglement properties, which sprouted a range of theoretical \cite{PhysRevA.61.032302,PhysRevA.67.032314,PhysRevLett.93.130409,PhysRevA.73.042310,PhysRevA.80.022315,PhysRevA.86.012328,PhysRevA.93.052313,walschaers_entanglement_2017} and experimental \cite{ourjoumtsev_increasing_2007,takahashi_entanglement_2010,Morin:2014aa} results. Many of the obtained theoretical results rely on the purity of the initial Gaussian state, and are hard to generalise to arbitrary Gaussian states. In recent years, there has been some progress in developing analytical tools to describe general photon subtracted states \cite{walschaers_entanglement_2017,PhysRevA.99.053816}, but it remain challenging to use these techniques to evaluate entanglement measures. Therefore, one has investigated related questions, such as, for example, the spread of non-Gaussian features in multimode systems \cite{Fedorov:15,Katamadze:18,Walschaers:2018aa,PhysRevLett.124.150501}.

The framework presented in this manuscript is particularly fruitful to investigate the spread of non-Gaussian features through photon addition or subtraction. We will first show how the results of \cite{PhysRevLett.124.150501} can be recovered via (\ref{eq:Wf}). Then, we use the present framework to provide analytical results for the states that can be obtained by subtracting multiple photons in a multimode system.

\subsubsection{Adding or subtracting a single photon}

We will start by studying the addition and subtract of a single photon. The scenario for photon-subtracted states was studied in detail in \cite{PhysRevLett.124.150501} and out goal in this example is to show how these previous results can be obtain in the context of our present framework. Furthermore, we also study photon addition, which not yet been considered in the context of the remote generation of Wigner negativity.  

Creating and annihilation operators are by construction operators that act on a single mode $g$. In the single photon scenario, we find the photon-subtracted state
\begin{equation}
\hat \rho_- = \frac{ \hat a_g\hat \rho \hat a_g^{\dag}}{\tr [\hat n_g \hat\rho]},
\end{equation}
and the photon-added state
\begin{equation}
\hat \rho_+ = \frac{ \hat a^{\dag}_g\hat\rho \hat a_g}{\tr [(\hat n_g + \mathds{1}) \hat\rho]}.
\end{equation}
These states are clearly fit the framework of (\ref{eq:condFormal}). In the context of (\ref{eq:condState}), the reduced state of the set of modes $\bf f$ is obtained by choosing $\hat A = \hat n_g$ and $\hat A= \hat n_g + \mathds{1}$ for photon subtraction and addition, respectively.  We can then use (\ref{eq:Wf}) to obtain the Wigner function in the subset of modes $\bf f$, for which we must evaluate $\<\hat n_g\>_{g \mid \vec x_{\bf f}}$. To this goal, we evaluate the Wigner function of the number operator, which is given by
\begin{equation}
W_{\hat n_g}(\vec{x}_g) = \frac{1}{16 \pi} (\norm{\vec{x}_g}^2 - 2),
\end{equation}
such that we directly find that
\begin{equation}\label{eq:En}
\<\hat n_g\>_{g \mid \vec x_{\bf f}} = \frac{1}{4}(\tr V_{g \mid \vec x_{\bf f}} + \norm{\vec\xi_{g\mid \vec x_{\bf f}}}^2 - 2),
\end{equation}
where the dependence on $\vec x_{\bf f}$ comes from $\vec\xi_{g\mid \vec x_{\bf f}}$. 

Thus, we find for the photon-subtracted state that 
\begin{equation}
W^-_{{\bf f} \mid \hat n_g}(\vec{x}_{\bf f}) = \frac{\tr V_{g \mid \vec x_{\bf f}} + \norm{\vec\xi_{g\mid \vec x_{\bf f}}}^2 - 2}{\tr V_{g} + \norm{\vec\xi_{g}}^2 - 2}W_{\bf f}(\vec{x}_{\bf f}).
\end{equation}
From this result, we immediately observe that the potential Wigner-negativity of these states depends on whether or not $\tr V_{g \mid \vec x_{\bf f}} < 2$. In \cite{PhysRevLett.124.150501} it was shown through the Williamson decomposition that $\tr V_{g \mid \vec x_{\bf f}} \leqslant 2 \sqrt{\det V_{g\mid \vec x_f}}$. This directly implies that EPR steering (\ref{eq:mu}) is a necessary condition to reach Wigner-negativity. It is instructive to emphasise that 
\begin{equation}
\norm{\vec\xi_{g \mid \vec x_{\bf f}}}^2 = \norm{\vec{\xi}_g + V_{g{\bf f}} V_{\bf f}^{-1} (\vec{x}_{\bf f} - \vec{\xi}_{\bf f})}^2,
\end{equation} 
from which one ultimately retrieves the expression
\begin{align}\label{eq:WignerDispRed}
W_{{\bf f} \mid \hat n_g}^{-}(\vec{x}_{\bf f})= \frac{\Big\{ \norm{\vec{\xi}_g + V_{g{\bf f}} V_{\bf f}^{-1} (\vec{x}_{\bf f} - \vec{\xi}_{\bf f})}^2  +\tr V_{g\mid \vec x_{\bf f}} - 2\Big\}}{\tr V_g + \norm{\vec\xi_g}^2 - 2} W_{\bf f}(\vec{x}_{\bf f}) \nonumber,
\end{align}
which is the result that was derived in \cite{PhysRevLett.124.150501}.\\

For the photon-added state, we can perform a completely analogous computation with \begin{equation}
W_{\hat n_g + \mathds{1}}(\vec{x}_g) = \frac{1}{16 \pi} (\norm{\vec{x}_g}^2 + 2),
\end{equation} from which we find that
\begin{equation}\label{eq:photonAdded}
W^+_{{\bf f} \mid \hat n_g}(\vec{x}_{\bf f}) = \frac{\tr V_{g \mid \vec x_{\bf f}} + \norm{\vec\xi_{g\mid \vec x_{\bf f}}}^2 + 2}{\tr V_{g} + \norm{\vec\xi_{g}}^2 + 2}W_{\bf f}(\vec{x}_{\bf f}).
\end{equation}
This result immediately shows that this Wigner function is always positive, which implies that it is impossible to remotely create Wigner negativity through photon addition.

In previous work, we highlighted that photon addition always creates Wigner negativity in the mode where the photon is added \cite{walschaers_entanglement_2017}. What we observe in (\ref{eq:photonAdded}) can be understood as the complementary picture for the other modes. This result also highlights an operational difference between photon subtraction and addition: photon additional is a more powerful tool to locally create Wigner negativity, whereas photon subtraction has the potential to create Wigner negativity non-locally (i.e. in modes that can steer the mode in which the photon is subtracted). 

\subsubsection{Subtracting multiple photons}\label{sec:multiphoton}

When multiple photons are added or subtracted, or when we chain combinations of addition and subtraction operations, the evaluation of $\<\hat A\>_{{\bf g} \mid \vec x_{\bf f}}(\vec{x}_{\bf f})$ will rapidly become more complicated. A general strategy to approach this problem avoids the explicit evaluation of $W_{\hat A}(\vec{x}_{\bf g})$, but rather uses  standard techniques for the evaluation of moments of multivariate Gaussian distributions. This ultimately boils down to applying Wick's theorem \cite{PhysRev.80.268} and summing over all matchings (see Appendix \ref{app:Matching} for details). Even though this task can be implemented numerically, the corresponding analytical expressions quickly become intractable.

To illustrate this method, we consider the multimode scenario where two photons are subtracted in different orthogonal modes, $g_1$ and $g_2$, which implies that the conditioning implements the following map
\begin{equation}
    \rho \mapsto \frac{\hat a_{g_1} \hat a_{g_2} \hat \rho \hat a^{\dag}_{g_2}\hat a^\dag_{g_1}}{\tr[\hat n_{g_1} \hat n_{g_2} \hat \rho]}.
\end{equation}
This implies that we must apply our formalism with $\hat A = \hat n_{g_1} \hat n_{g_2}$. To treat this problem with the technique of matchings, we use the Gaussian identity (not that we do not explicitly write the dependence on $\vec x_{\bf f}$ to simplify notation) 
\begin{equation}\label{eq:long}
    \begin{split}
    \gf{\hat n_{g_1} \hat n_{g_2}} =& \abs{\gf{\hat a_{g_1}}}^2 \abs{\gf{\hat a_{g_2}}}^2+\gf{\hat n_{g_1}}' \abs{\gf{\hat a_{g_2}}}^2\\
    &+\gf{\hat n_{g_2}}' \abs{\gf{\hat a_{g_1}}}^2+\gf{\hat a^{\dag}_{g_1}\hat a_{g_2}}' \gf{\hat a^{\dag}_{g_2}}\gf{\hat a_{g_1}}\\
    &+\gf{\hat a^{\dag}_{g_1}\hat a_{g_2}}' \gf{\hat a^{\dag}_{g_2}\hat a_{g_1}}'+\gf{\hat a^{\dag}_{g_1}\hat a^{\dag}_{g_2}}' \gf{ \hat a_{g_1} \hat a_{g_2}}'\\
    &+ \gf{\hat n_{g_1}}' \gf{\hat n_{g_2}}' +\gf{\hat a^{\dag}_{g_2}\hat a_{g_1}}' \gf{a^{\dag}_{g_1}}\gf{\hat a_{g_2}}\\
    &+\gf{\hat a^{\dag}_{g_1}\hat a^{\dag}_{g_2}}' \gf{\hat a_{g_1}}\gf{\hat a_{g_2}}\\
    &+\gf{\hat a_{g_1}\hat a_{g_2}}' \gf{\hat a^{\dag}_{g_1}}\gf{\hat a^{\dag}_{g_2}},
    \end{split}
\end{equation}
where $\gf{\dots}'$ denotes the non-displaced version of the distribution. We can immediately identify \begin{equation}\gf{\hat a_{g_1}} = \frac{1}{2} (\vec{\xi}_{{\bf g} \mid \vec x_{\bf f}}^T \vec g_1 + i \vec{\xi}_{{\bf g} \mid \vec x_{\bf f}}^T \Omega \vec g_1),\end{equation}
subsequently, we obtain from (\ref{eq:En}) that
\begin{equation}\gf{\hat n_{g_1}}' = \frac{1}{4}(\tr V_{g_1 \mid \vec x_{\bf f}} - 2) ,\end{equation}
and finally we find new types of terms, that are given by
\begin{equation}\begin{split}\gf{\hat a^{\dag}_{g_1} \hat a^{\dag}_{g_2}}' = \frac{1}{4}[&\vec g_1^T V_{{\bf g} \mid \vec x_{\bf f}} \vec g_2 - \vec g_1^T \Omega^T V_{{\bf g} \mid \vec x_{\bf f}}\Omega \vec g_2\\ & -i ( \vec g_1^T V_{{\bf g} \mid \vec x_{\bf f}} \Omega \vec g_2 + \vec g_1^T \Omega^T V_{{\bf g} \mid \vec x_{\bf f}} \vec g_2)] ,
\end{split}\end{equation}
and
\begin{equation}\begin{split}\gf{\hat a^{\dag}_{g_1} \hat a_{g_2}}' = \frac{1}{4}[&\vec g_1^T V_{{\bf g} \mid \vec x_{\bf f}} \vec g_2 + \vec g_1^T \Omega^T V_{{\bf g} \mid \vec x_{\bf f}}\Omega \vec g_2\\ & + i ( \vec g_1^T V_{{\bf g} \mid \vec x_{\bf f}} \Omega \vec g_2 - \vec g_1^T \Omega^T V_{{\bf g} \mid \vec x_{\bf f}} \vec g_2)].
\end{split}\end{equation}
Computation required to obtain the final result is tedious but straightforward. We find that 
\begin{equation}\label{eq:En1n2}
    \begin{split}
    \gf{\hat n_{g_1} \hat n_{g_2}} = \frac{1}{16}[&(\tr V_{g_1\mid \vec x_{\bf f}} + \norm{\vec \xi_{g_1 \mid \vec x_{\bf f}}}^2-2)(\tr V_{g_2\mid \vec x_{\bf f}} + \norm{\vec \xi_{g_2 \mid \vec x_{\bf f}}}^2-2)\\
    &+ 2 \tr(C^TC) + 4 \vec \xi_{g_1 \mid \vec x_{\bf f}}^T C \vec \xi_{g_2 \mid \vec x_{\bf f}} ],
    \end{split}
\end{equation}
where we have defined the submartix $C$ as the off-diagonal block of $V_{{\bf g} \mid \vec x_{\bf f}}$ via
\begin{equation}
V_{{\bf g} \mid \vec x_{\bf f}} =
    \begin{pmatrix}
    V_{g_1 \mid \vec x_{\bf f}} & C \\
    C^T & V_{g_2 \mid \vec x_{\bf f}}
    \end{pmatrix}.
\end{equation}
Non-zero entries in the block $C$ can occur due to various caused. First of all, it can be due to a correlation between the modes $g_1$ and $g_2$ in the initial Gaussian state (as seen from the term $V_{\bf g}$ in (\ref{eq:schur})). However, non-trivial entries in $C$ also arise when modes $g_1$ and $g_2$ are both correlated to the same modes in ${\bf f}$, which is induced by the term $V_{{\bf gf}}V_{\bf f}^{-1}V_{\bf gf}^T$ in (\ref{eq:schur}).

The result (\ref{eq:En1n2}) directly show the appearance of a trivial term, $(\tr V_{g_1\mid \vec x_{\bf f}} + \norm{\vec \xi_{g_1 \mid \vec x_{\bf f}}}^2-2)(\tr V_{g_1\mid \vec x_{\bf f}} + \norm{\vec \xi_{g_1 \mid \vec x_{\bf f}}}^2-2)$, which multiplies the effect of photon subtraction in $g_1$ with that of photon subtraction in $g_2$. However, when both modes are sufficiently ``close'' to each other, we find the additional terms $2 \tr(C^TC) + 4 \vec \xi_{g_1 \mid \vec x_{\bf f}}^T C \vec \xi_{g_2 \mid \vec x_{\bf f}}$, which can be interpreted as some form of interference between the two photon subtractions.

Fig.~\ref{Fig:Two-Photon-Sub} provides an illustration, where we inject three pure squeezed vacuum states into a series of beamsplitters to generate an entangled three-mode state from which we subtract two photons. The first two squeezed vacuum states have $5 {\rm dB}$ squeezing in opposite quadratures and are mixed on beamsplitter with $75\%$ transmittance. One of the output ports will serve as mode $g_1$, whereas the other is injected into a section beamsplitter of $25\%$ transmittance. In the other input port of this beamplitter, we inject the third squeezed vacuum state, which is also squeezed by $5 {\rm dB}$. One of the output ports of the $25\%$ transmittance beamsplitter serves as mode $g_2$, and in the other output port we find mode $f$, which is the mode for which we reconstruct the output Wigner function using (\ref{eq:En1n2}). Photon subtraction is represented by a highly transmitting beamsplitter which sends a small amount of light to a photon detector. Two-photon subtraction then happens when both detectors click at the same time, and we can condition the state in mode $f$ upon this detection outcome. This post-selection scheme effectively implements the operators $\hat a_{g_1}$ and $\hat a_{g_2}$ on modes $g_1$ and $g_2$, respectively.

We observe that the conditional state $W_{f\mid \hat A}(\vec{x}_f)$, with $\hat A = \hat n_{g_1} \hat n_{g_2}$ reaches negative values in two distinct regions of phase space. Indeed, with the Williamson decomposition of $V_{{\bf g} \mid \vec x_{f}}$ we can quantify \cite{Kogias:2015aa} the strength of EPR steering from mode $f$ to the set of modes ${\bf g}$ to be $\mu=0.548$. Furthermore, the fact that there are two negativity regions is a hallmark of the subtraction of two photons. This example shows that our framework is a highly versatile tool for CV quantum state engineering. \\

\begin{figure}
\includegraphics[width=0.49\textwidth]{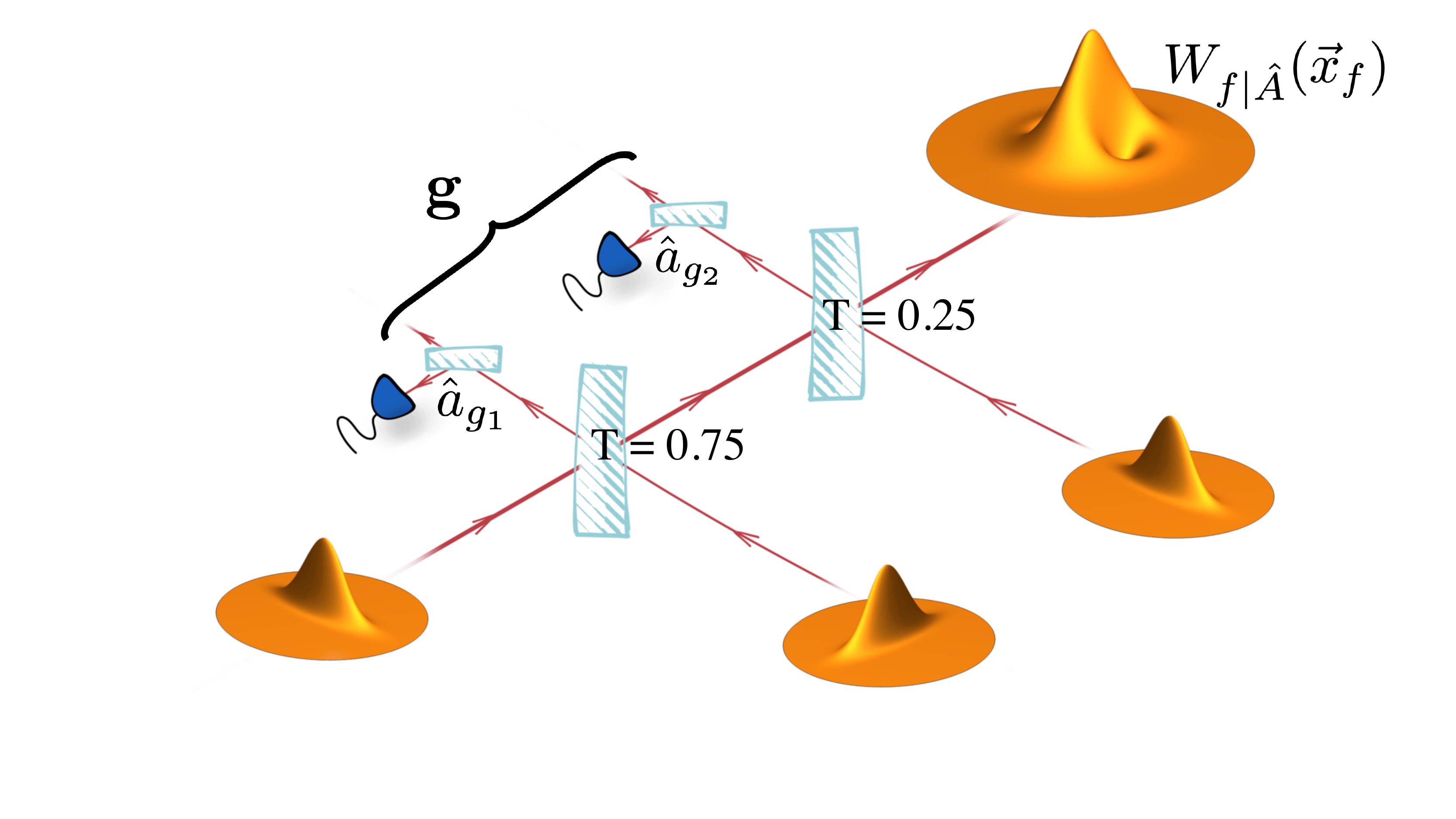}
\caption{Conditional state Wigner function $W_{f\mid \hat A}(\vec{x}_f)$, obtained by subtracting a photon in two of the three modes in a three-mode entangled state. This entangled state is generated by mixing three squeezed vacuum states in a sequence of beamsplitters with transmittance of $75\%$ (left) and $25\%$ (right). Two of the squeezed vacuum states are squeezed by $5 {\rm dB}$ in the $x$-quadrature (left, right) and one is squeezed by $5 {\rm dB}$ along the $p$ quadrature (middle). The photon subtraction is represented by highly transmitting beamsplitters which send a small fraction of light to a photon detector, which effectively implements the operators $\hat a_{g_1}$ and $\hat a_{g_2}$ on the mode $g_1$ and $g_2$, respectively.  \label{Fig:Two-Photon-Sub}}
\end{figure}

Finally, we consider the complementary scenario where two photons are subtracted from one mode. In this case, we can still use the perfect matching technique (\ref{eq:long}), when creation and annihilation operators are in normal ordering. In this case, we obtain $\hat A = \hat a^\dag_g  \hat a^\dag_g  \hat a_g  \hat a_g,$ and analogously to (\ref{eq:long}), we find that 
\begin{equation}\begin{split}
    \<\hat a^\dag_g  \hat a^\dag_g  \hat a_g  \hat a_g\>_{g \mid \vec x_{\bf f}} = \frac{1}{16}\Big[& (\tr V_{g\mid \vec x_{\bf f}} + \norm{\vec \xi_{g \mid \vec x_{\bf f}}}^2)^2 + 2 \tr(V_{g\mid \vec x_{\bf f}}^2)\\
    &+ 4 \vec{\xi}^T_{g\mid \vec x_{\bf f}} [V_{g\mid \vec x_{\bf f}} - 2 \mathds{1} ] \vec{\xi}_{g\mid \vec x_{\bf f}}\\
    &- 8\tr V_{g \mid \vec x_{\bf f}} + 8\Big].
    \end{split}
\end{equation}
This result can then directly be inserted in (\ref{eq:Wf}) to obtain the final conditional state for the set of modes ${\bf f}$ when two photons are subtracted in mode $g$. As expected, the subtraction of two photons can induce Wigner negativity only when there is EPR steering from the modes ${\bf f}$ to mode $g$. \\

As such, we have shown that our framework allows us to analytically describe conditional non-Gaussian states in a regime which is highly challenging for many other methods. For example, it is highly challenging to approach the problem with the correlation function methods of \cite{walschaers_entanglement_2017}, even though this method was highly successful for single-photon subtraction in multimode states.

These methods can in principle be extended to deal with higher numbers of added and/or subtracted photons in various modes. However, it must be emphasised that one will quickly encounter practical boundaries as finding all possible matchings is a computationally hard problem \cite{Matching-Book}. Finding an exact description of the Wigner function that is obtained by subtracting a large number of photons from a subset of an entangled Gaussian state seems to be a computationally hard problem that has its roots in graph theory. The problem of finding all matchings also lies at the basis of Gaussian boson sampling \cite{PhysRevLett.119.170501,PhysRevA.98.032310}, and it is not expected to be easy to overcome. The problem of Gaussian boson sampling can in turn also be related to CV sampling from photon-added or -subtracted states \cite{PhysRevA.96.062307}.

\section{Conclusions}

We presented a general framework that describes the Wigner function that is obtained by applying an arbitrary operation on a subset of modes of a multimode Gaussian state, and conditioning the remaining modes on this operation. The most natural way of interpreting this scenario is by considering this operation to be a measurement, such that the state of the remaining modes is obtained by post-selecting on a specific measurement outcome, as is the case for heralding. However, this framework can also be used to study the non-local effects of photon addition and subtraction.

Our framework relies heavily on classical probability theory, and in particular on properties of conditional probability distributions (\ref{eq:condProbDist}). We use the fact that Gaussian states have positive Wigner functions, such that associated conditional probability distributions on phase space are well-defined as probability distribution (but not necessarily as quantum states, because they can violate the Heisenberg inequality). In this regard, our general results (\ref{eq:EA} - \ref{eq:expA}) are valid for all initial states with a positive Wigner function.\\

Gaussian states are not only the most relevant initial states from an experimental point of view, they also have the theoretical advantage of leading to a Gaussian conditional probability distribution. The latter is an enormous advantage for evaluating the crucial quantity $\<\hat A\>_{{\bf g} \mid \vec x_{\bf f}}$, as defined in (\ref{eq:EA}). On a more fundamental level, we note that the covariance matrix (\ref{eq:schur}) of this Gaussian conditional probability distribution is essential in the theory of Gaussian EPR steering. This observation allows us to directly prove that Gaussian EPR steering is a necessary prerequisite for the conditional preparation of Wigner-negativity, regardless of the conditional operation that is performed. 

In previous work, we already showed that Gaussian EPR steering is also a sufficient ingredient for the remote preparation of Wigner-negativity, in the sense that there always exists a combination of a Gaussian operation and photon subtraction in the modes ${\bf g}$ that induces Wigner-negativity in the modes ${\bf f}$. We thus establish a fundamental relation between Gaussian EPR steering and the ability to prepare a Wigner-negative state in correlated modes. This result is particularly important in the light of measurement-based quantum computation, where large Gaussian cluster states form the backbone for implementing a quantum algorithm. The actual computation is then executed by performing measurements (or more general operations) on some modes of the cluster, in order to project the remainder of the system in a desired quantum state. To claim that such a computation is universal, one must be able to induce Wigner-negativity. Our results therefore show that EPR steering is an essential figure of merit in these cluster states in order to claim that a cluster state is suitable for universal quantum computation.\\

From a practical point of view, the examples in Section \ref{sec:examples} show that our framework is highly versatile. However, it also highlights the boundaries of analytical treatments. Even though the obtained expression (\ref{eq:EA}) for $\<\hat A\>_{{\bf g} \mid \vec x_{\bf f}}$ is easy to interpret conceptually, the actual evaluation can still be challenging. Regardless, we must emphasise that the elegance and simplicity of our framework does allow us to obtain results with far greater ease than previously possible. Many of the methods known in literature are either hard to generalise to arbitrary Gaussian initial states \cite{PhysRevA.72.033822,Thekkadath2020engineering}, focused on one particular measurement or operation \cite{walschaers_entanglement_2017,su2019generation,PhysRevA.99.053816}, or are just generally hard to interpret or use analytically. Our framework can be applied to any initial Gaussian state, and any conditional operation, provided the Wigner function of $\hat A$ is known. 

As such, our results provide a starting point for investigating a wide range of new questions related to multimode conditional preparation of non-Gaussian states. By establishing a fundamental relation between EPR steering and Wigner-negativity, we specifically highlight that this framework is also suited to obtain general analytical results, which is often challenging in the study of states that are, both, highly non-Gaussian and highly multimode.

\begin{acknowledgments}
V.P. acknowledges financial support from the European Research Council under the Consolidator Grant COQCOoN
(Grant No. 820079)
\end{acknowledgments}

\appendix
\section{Matchings}\label{app:Matching}
In Section \ref{sec:multiphoton}, we refer to the method of perfect matchings to evaluate $\gf{\hat A}$, which we here present with more rigour and detail.

The technique of perfect matchings is a common practice to evaluate correlation functions in Gaussian states, which can be traced back to works such as \cite{PhysRev.80.268,robinson_ground_1965}. In formal terms, we consider a Gaussian (also known as ``quasi-free'' in the mathematical physics literature) functional $\gf{\dots}$ on the algebra of observables for the canonical commutation relations \cite{verbeure_many-body_2011}. A defining property of such functionals is that truncated correlation functions \cite{robinson_ground_1965} for any product of more than two creation and annihilation operators vanishes. This property is the direct analog of the cumulants of a multivariate Gaussian distribution and it implies that the functional $\gf{\dots}$ is fully determined by the quantities $\gf{a^{\dag}_{g_1}a^{\dag}_{g_2}}' = \gf{a_{g_1}a_{g_2}}'^*$, $\gf{a^{\dag}_{g_1}a_{g_2}}'$, and $\gf{a^{\dag}_{g_1}} = \gf{a_{g_1}}^*$. Where the $\gf{\dots}'$ is the non-displaced version of the functional, which is formally defined as
\begin{equation}
 \gf{a^{\#}_{g_1}a^{\#}_{g_2}}' = \gf{a^{\#}_{g_1}a^{\#}_{g_2}} - \gf{a^{\#}_{g_1}}\gf{a^{\#}_{g_2}},
 \end{equation}
where $a^{\#}_{g_1}$ can be either a creation or an annihilation operator. We can then write the following general property of Gaussian functional:
\begin{equation}\begin{split}\label{eq:AllMatchings}
    &\gf{a^{\dag}_{g_1}\dots a^{\dag}_{g_n}a_{g_{n+1}}\dots a_{g_{n+m}}}\\
    &= \sum_{M \in {\cal M}}\prod_{\{j_1,j_2\} \in M}  \gf{a^{\#}_{j_1}a^{\#}_{j_2}}' \prod_{\{k\} \in M}\gf{a^{\#}_{k}}.
    \end{split}
\end{equation}
Where ${\cal M}$ is the set of all ``matchings'' for the set $\{g_1, \dots g_{n+m}\}$. We use the term matching to refer to a partition of the set $\{g_1, \dots g_{n+m}\}$ in subsets with either one or two elements. An example of such a possible matching is given by $M = \{\{g_1,g_2\}, \dots,\{g_{n-1},g_{n}\},\{g_n\},  \dots, \{g_{n+m}\}\}$. For each partition $M \in {\cal M}$, we then evaluate the product of associated two-point and one-point functions, where any pair $\{j_1,j_2\} \in M$ is associated with $\gf{a^{\#}_{j_1}a^{\#}_{j_2}}'$ and $\{k\} \in M$ is associated with $\gf{a^{\#}_{k}}$. Note that for $i = g_1, \dots,  g_n$, the operator $a^{\#}_{i}$ is a creation operator, whereas for $i = g_{n+1}, \dots,  g_{n+m}$ it is an annihilation operator. 

The problem of finding all matchings is a well-known problem in graph theory. To make the connection, we can represent each elements of the set $\{g_1, \dots g_{n+m}\}$ as a vertex in a full connected graph, and then consider the resulting partitions as the matchings of this graph \cite{Matching-Book}. The number of terms in (\ref{eq:AllMatchings}) quickly explodes as the number of creation and annihilation operators increases, which ultimately makes the problem of evaluating $\gf{a^{\dag}_{g_1}\dots a^{\dag}_{g_n}a_{g_{n+1}}\dots a_{g_{n+m}}}$ computationally hard. \\   

A subtle point in our treatment of $\gf{\hat A}$ is that $\gf{\dots}$ is not an expectation value of a Gaussian quantum state. Hence, it is legitimate to wonder up to what extent the techniques of Gaussian quantum states can be used to evaluate $\gf{\hat A}$. From its definition in (\ref{eq:EA}), it can be deduced that $\gf{\dots}$ is a functional on the algebra of observables. It directly inherits the Gaussian properties from $W(\vec{x}_{\bf g} \mid \vec{x}_{\bf f})$, such that it is a Gaussian functional. In particular the property (\ref{eq:AllMatchings}) can directly be traced back to the structure of the moments of the multivariate Gaussian probability distribution $W(\vec{x}_{\bf g} \mid \vec{x}_{\bf f})$. The Gaussian functional $\gf{\dots}$ is not associated to a state because it is not a positive functional, i.e., we can find positive operators $\hat A$ for which $\gf{\hat A} < 0$. For a Gaussian functional on the algebra of canonical commutation relations to be equivalent to a quantum state, one must impose additional constraints on the functional to guarantee positivity \cite{robinson_ground_1965,verbeure_many-body_2011}. These constraints ultimately boil down to imposing the Heisenberg inequality.

\bibliography{notes_steering}

\end{document}